\documentclass[a4paper]{article}
\usepackage[english]{babel}
\usepackage[utf8]{inputenc}
\usepackage{amsmath, amssymb, amsthm}
\usepackage{graphics}
\usepackage{todonotes}
\usepackage{physics}
\usepackage[margin=0.9in]{geometry}
\usepackage{hyperref}
\usepackage{graphicx}
\usepackage{physics}
\usepackage{mathrsfs}
\usepackage{verbatim}

\newcommand{\cc}{\leftrightarrow}

\usepackage{tikz}
\usetikzlibrary{positioning}
\usetikzlibrary{arrows}

\newtheorem{theorem}{Theorem}

\newtheorem{corollary}[theorem]{Corollary}
\newtheorem{conjecture}[theorem]{Conjecture}

\newtheorem{question}[theorem]{Question}

\newlength\tindent
\setlength{\tindent}{\parindent}
\setlength{\parindent}{\tindent/2}

\title{On monotonicity and couplings of random currents  and the loop-$\mathrm{O}(1)$-model}
\author{Frederik Ravn Klausen \\
QMATH, University of Copenhagen\\
klausen@math.ku.dk}
\date{\today}

\begin{document}

\maketitle

\begin{abstract}
\noindent Using recent couplings, we provide counterexamples to monotonicity properties of percolation models related to graphical representations of the Ising model. We further prove a new coupling of the double random current model to the loop-$\mathrm{O}(1)$-model. 
\end{abstract}

\section{Introduction} 
The field of graphical representations of the Ising model has been developing rapidly in the last 50 years. Examples of well-studied models include the random cluster model \cite{FK70}, \cite{Gri06}, the random current representation \cite{GHS70}, \cite{ABF87}, \cite{Aiz82}, \cite{AD19} as well as the loop-$\mathrm{O}(1)$-model \cite{KW41}. The motivation for studying these models is that one can link percolative and connectivity properties to phase transitions and correlation functions of the Ising model.
Usually the models are introduced in their own right, but for the purpose here we define the models directly as percolation models via their couplings to the loop-$\mathrm{O}(1)$-model \cite[Exercise 36]{DC17}, \cite[Theorem 3.5]{GJ09}, \cite{LW16} \cite{Lis}. The framework provided by these recent couplings greatly simplifies the proofs and counterexamples given here.

For the definitions, we need the notion of an even subgraph. 
Given a finite graph $G = (V,E)$ an \emph{ even subgraph of } (V,F)  is a spanning subgraph where each $ v \in V $  is incident to an even number of edges in $F$.   The set of even subgraphs of a graph $G$ is denoted $ \mathcal{E}_\emptyset(G) $. A natural probability measure on $ \mathcal{E}_\emptyset(G) $ is the \emph{loop-$\mathrm{O}(1)$-model} $l_x$  which is given by  
\begin{align}
l_x(g) = \frac{1}{Z} x^{\abs{g}}, \text{     for each } g  \in \mathcal{E}_\emptyset(G)
\end{align}
with $Z = \sum_{g \in \mathcal{E}_\emptyset(G)} x^{\abs{g}} $. 
 Here $\abs{g}$ denotes the number of edges in $g$ and $x = \tanh(\beta)$ as in \cite{Lis}. 
 
 For two configurations $\omega_1, \omega_2$ of edges in the graph $G$ we let $\omega_1 \cup \omega_2$ be the configuration with the union of the edges in $\omega_1 $ and $\omega_2$ and for two probability measures $\mu_1,\mu_2$ on the configurations of edges let $\mu_1 \cup \mu_2 $ be the measure corresponding to sampling independent configurations with the law of $\mu_1, \mu_2$ and taking the union of the two configurations. We define $\mu^{\otimes 2} = \mu \cup \mu $ and let $\mathbb{P}_x$ be Bernoulli edge percolation with parameter $x \in (0,1)$. Then we can define the (traced, sourceless) \textit{single random current}  at inverse temperature $\beta$  as
\begin{align}
P_x = l_x \cup \mathbb{P}_{1- \sqrt{1-x^2}}
\end{align}
 and the (FK-Ising, $q=2$) \textit{random cluster model} with $p = 1 - \exp(- 2\beta)$ by 
 \begin{align}
 \phi_\beta  = \phi_x = l_x \cup \mathbb{P}_x.
 \end{align}
In Appendix A3 we describe how the previous definitions are related to the standard definitions in the literature.
Using that $\mathbb{P}_p \cup  \mathbb{P}_q =  \mathbb{P}_{p+q - pq} $ we obtain as in \ (traced, sourceless) \emph{double random current} which is the union of two independent single random currents is given by 
\begin{align} P_x^{\otimes 2} = l_x^{\otimes 2} \cup \mathbb{P}_{x^2}
\end{align}
 and similarly the double random cluster model is $\phi_x^{\otimes 2} = l_x^{\otimes 2} \cup \mathbb{P}_{x(2-x)}$.

These models are motivated by being graphical representations of the Ising model. In particular, we have the following relation between correlation functions \cite[(1.5), (4.6)]{DC17} 
\begin{align}
P_x^{\otimes 2}(x \cc y) = \langle \sigma_x \sigma_y \rangle^2 = \phi_{ \beta }( x \cc y )^2  \text{      } x,y \in V. 
\end{align}
where $\sigma_x , \sigma_y$ are Ising spins and $\langle \cdot \rangle $ is the expectation of the Ising measure. 

Many tools are available for the random cluster model in particular monotonicity and the FKG-inequality {\cite[Theorem 1.6]{DC17}}.  In the following, we investigate the monotonicity properties of the other models some of which turn out to be less well behaved than the random cluster model. 

The first question is motivated by the fact that, by monotonicity of Ising correlations or of the random cluster model, it follows that $ x \mapsto P_x^{\otimes 2}(x \cc y) = \langle \sigma_x \sigma_y \rangle^2 = \phi_\beta( x \cc y )^2   $ is increasing. Here and in the following, we define $\{ A \cc B \}$ to be the event that at least one vertex in $A$ is connected to a vertex in $B$.
\begin{question}\cite[Question 2]{DC16} \label{dcq}
	Is the map
$
	x \mapsto P_x^{\otimes 2}(  A \cc B) 
$
	increasing for any subsets $A,B \subset V$. 
\end{question} 
\noindent In general, we are also interested in whether the model in itself is monotonic.  The next question is related and with Theorem \ref{sumthm} in mind it can be seen as an easier example of the question from before.

\begin{conjecture}\cite[Conjecture 5.1]{GMM18} \label{evencon}
	Let $G$ be an even graph then $l_x$ is monotonic. 
\end{conjecture}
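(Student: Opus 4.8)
I read ``$l_x$ is monotonic'' as the assertion that $l_x$ is positively associated: $l_x(A\cap B)\ge l_x(A)\,l_x(B)$ for all events $A,B$ that are increasing for the inclusion order on $\mathcal{E}_\emptyset(G)$. (The stronger, Holley-type notion of a monotonic measure is not the right reading here: on a linear code of minimum distance at least two the one-edge conditional laws are deterministic, so that notion already fails for a bundle of four parallel edges, where $l_x$ is nevertheless positively associated.) Two structural features of the even case would organise the whole argument. First, $G$ even means its full edge set is itself an even subgraph, i.e. $\mathbf{1}$ lies in the cycle space $\mathcal{C}=\mathcal{E}_\emptyset(G)\subseteq\mathbb{F}_2^{E}$; consequently $g\mapsto G\setminus g=\mathbf{1}+g$ is an involution of $\mathcal{C}$ that reverses inclusion and pushes $l_x$ forward to $l_{1/x}$. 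Thus positive association at $x$ is equivalent to positive association at $1/x$, and the self-dual value $x=1$, where $l_1$ is uniform on $\mathcal{C}$, is a clean base case. Second, $\mathcal{C}$ is closed under symmetric difference but \emph{not} under $\cup$ and $\cap$, so it is not a sublattice of $\{0,1\}^{E}$.

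The plan is to reduce to covariances of up-sets and feed them into ferromagnetic Ising inequalities. Writing any increasing function as a nonnegative combination of indicators of up-sets, positive association is equivalent to $\operatorname{Cov}_{l_x}(\mathbf{1}_U,\mathbf{1}_{U'})\ge 0$ for all up-sets $U,U'$ of $(\mathcal{C},\subseteq)$. The cylinder moments $l_x(\{g\supseteq F\})=\mathbb{E}_{l_x}\big[\prod_{e\in F}\omega_e\big]$ can be written, via the high-temperature expansion, as $x^{|F|}$ times a ratio of loop partition functions times the Ising correlation $\langle\sigma_{\partial F}\rangle$ on $G$ with the edges of $F$ deleted, where $\partial F$ is the set of odd-degree vertices of $F$. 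At the level of single edges and pairs one expects the pairwise positive correlations $\operatorname{Cov}_{l_x}(\omega_e,\omega_f)\ge 0$ to follow from the ferromagnetic (Griffiths/GKS) inequalities, consistent with the monotonicity of $\langle\sigma_u\sigma_v\rangle$ recalled in the introduction.

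The main obstacle is passing from these generators to the full FKG inequality. Because $\mathcal{C}$ is not a sublattice, Holley's criterion is unavailable, and the random-cluster coupling $\phi_x=l_x\cup\mathbb{P}_x$ (whose FKG property is known) cannot be used in reverse, since one cannot recover positive association of a factor after convolving with independent Bernoulli noise. Expanding a general up-set as a union of cylinders and applying inclusion--exclusion expresses $\operatorname{Cov}_{l_x}(\mathbf{1}_U,\mathbf{1}_{U'})$ as a signed alternating sum of Ising multipoint functions, and GKS gives no definite sign for such sums once three or more edges, constrained by the parity of the even-subgraph condition, interact. I expect this to be exactly the delicate point: the local even-degree constraints are precisely the mechanism that can turn the robust pairwise positivity into a negative higher-order correlation.

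I would therefore pursue two routes in parallel. The optimistic route is to build a monotone (Holley-type) coupling of $l_x$ with itself using the complement involution and the symmetric-difference structure of $\mathcal{C}$, exploiting the $x\leftrightarrow 1/x$ symmetry to reduce to a neighbourhood of the uniform point $x=1$; success here would prove the conjecture. The cautious route---more in keeping with the counterexample theme of this paper---is to test the reduced inequality on the smallest even graphs with a two- or three-dimensional cycle space where the parity constraints bite, for instance $K_5$ or small planar even graphs (for which $l_x$ is the domain-wall ensemble of a ferromagnetic Ising model on the bipartite planar dual), and to look for two increasing events, each a union of a few cylinders, whose correlation is strictly negative. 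Whichever way this generators-to-FKG step resolves is, I believe, the entire content of the problem.
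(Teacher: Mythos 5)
There is a genuine gap here, and it is twofold. First, you have misread the statement: in this paper ``monotonic'' means stochastic monotonicity in the parameter $x$ (the row MON of Table~\ref{table:overview}), i.e.\ $l_{x_1}$ is stochastically dominated by $l_{x_2}$ whenever $x_1\le x_2$, which in particular would force $x\mapsto l_x(A)$ to be non-decreasing for every increasing event $A$. Positive association (FKG) is a different property, occupying a separate row of the table, and is not what is being asked. Second, and decisively, the conjecture is \emph{false}: the paper's entire treatment of this statement is a counterexample, not a proof. Your ``optimistic route'' therefore cannot succeed, and your ``cautious route'' is aimed at the wrong inequality --- exhibiting two increasing events with negative correlation would refute FKG, whereas what is needed (and what the paper produces) is an increasing event whose probability fails to increase in $x$.

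The paper's argument is a short explicit computation. Take the even graph with two vertices of degree four joined by two outer paths of length $n$ and two inner paths of length $m$ (with $m$ even), and let $a,b$ be the midpoints of the two inner paths; every vertex has even degree, so the hypothesis of the conjecture holds. There are exactly eight even subgraphs, $Z=1+x^{2n}+x^{2m}+4x^{n+m}+x^{2n+2m}$, and $a$ and $b$ are connected only in the subgraph consisting of the two inner paths and in the full graph, giving
\begin{align*}
l_x(a \cc b)=\frac{x^{2m}+x^{2m+2n}}{Z},
\end{align*}
which for $m=2$ and $n\ge 8$ is not monotone in $x$: there is an intermediate range of $x$ in which the four even subgraphs of weight $x^{n+m}$, all of which disconnect $a$ from $b$, absorb probability mass. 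Since $\{a\cc b\}$ is an increasing event, this refutes SING, hence CON and MON, for $l_x$ on an even graph. Your observations about the $x\leftrightarrow 1/x$ complementation symmetry and the reduction of cylinder probabilities to Ising correlations are not wrong in themselves, but they address a question the statement does not ask and cannot rescue a claim that a four-line computation disproves.
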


\noindent In the following, we give a counterexample to Conjecture \ref{evencon} and we give partial results and counterexamples  towards Question \ref{dcq}. Most notably we show that the single random current measure is \emph{not} monotonic.  In general, for each of the above mentioned percolation models, we try to understand whether the FKG inequality, monotonicity, monotonicity of connection events $\{A \cc B \}$ and monotonicity of singleton connection events $\{ a \cc b \}$, hold. We denote these properties by FKG, MON, CON and SING respectively.

\begin{table}[ht]
	\caption{Overview of monotonicity properties. }
	\label{table:overview}
	\centering
	\begin{tabular}{c c c c c c c}
		\hline
		Case & $l_x$ & $P_x$ & $\phi_x$ &  $l_x^{\otimes 2}$ & $P_x^{\otimes 2}$  & $\phi_x^{\otimes 2}$  \\ [0.5ex] 
		$p(x)$ & 0  & $1- \sqrt{1-x^2}$ &$ x$ &  0  &$  x^2 $ &$ x(2-x)$  \\  
		\hline
		FKG & $\times $ & $\times $  & \checkmark & $\times $  & ?  &  \checkmark   \\
		MON &$\times $  & $\times $ &  \checkmark  &$\times $  & ?  &  \checkmark \\
		CON &$\times $ &$\times $  & \checkmark  & $\times $ & ? &  \checkmark  \\
		SING &$\times $ & $\times $ & \checkmark  &   $\times $  &  \checkmark  & \checkmark  \\
		\hline
	\end{tabular}
	\label{table:nonlin}
\end{table}

The double current is more well behaved as it satisfies (SING), we further prove a new coupling for the double current and discuss its monotonicity properties. Some of the information in the theorems and counterexamples given is summed up in Table \ref{table:overview}.

\section{Counterexamples}

In this section we give counterexamples to the FKG and monotonicity for the models $l_x, P_x,$ and $l_x^{\otimes 2}$ .
\\
\\
\textbf{2.1 Counterexamples to FKG.}
Consider the graph in Figure \ref{fkgcounter} with $n =l$.  
The partition function corresponding to this graph is 
\begin{align*}
Z =    \sum_{g \in \mathcal{E}_\emptyset(G)} x^{\abs{g}}  =1+ x^{n+m} + x^{n+m} + x^{2n}.
\end{align*}
 Let $X_1$ be the event that all edges in the upper $n+m$ loop are open and let $X_2$ be the event that all edges in the lower $n+m$ loop are open. 
\\
\\
\noindent \textbf{2.1.1 Loop-$\mathrm{O}(1)$-model: } Notice that $l_x(X_1 \cap X_2)  = 0$, whereas $l_x(X_1) > 0, l_x(X_2) > 0$ and there is no positive association.  A counterexample on an even graph is given in \cite{GMM18}.

\begin{figure}

	\begin{center}	{\caption{The graph used to construct counterexamples to FKG. The three segments consist of $n,m$ and $l$ edges respectively. In some examples we let $n=l$. \label{fkgcounter} }}{
		
		\includegraphics[scale = 0.2]{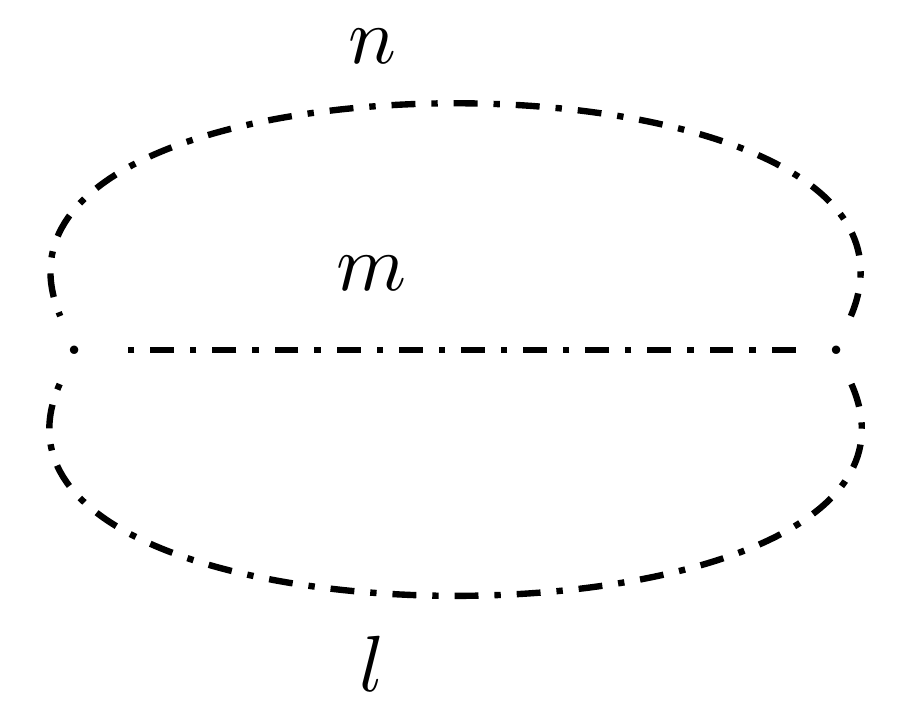}


		}
	\end{center}
\end{figure}
\vspace{0.2 cm}
\noindent \textbf{2.1.2 Single random currents:}
For the traced single random current we can use the same example and we get with $p(x)=1- \sqrt{1-x^2}$ that
\begin{align*}
Z P_x(X_2) = Z P_x(X_1) =  p^{n+m} + x^{n+m} + x^{n+m}p^n + x^{2n}p^m 
\end{align*} and 
\begin{align*} Z P_x(X_2 \cap X_1)  = p^{2n+m} + 2 \cdot x^{n+m}p^n + x^{2n}p^m. \end{align*}
Now, since $Z \to 1$ in the limit $x \to 0$ the function 
\begin{align*}
x \mapsto P_x(X_2 \cap X_1) -P_x(X_2)P_x(X_1) = \frac{Z P_x(X_2 \cap X_1) - Z P_x(X_2) Z P_x(X_1)}{Z^2} + \frac{(Z-1) P_x(X_2 \cap X_1)}{Z^2} 
\end{align*}
becomes negative for sufficiently small $x$. Thus, FKG is not satisfied.
Using the same example, but a slightly more complicated analysis the same counterexample works for $l_x^{\otimes 2}$  we give the details in the Appendix.

For the \textit{double random current} this example cannot be used to find a counterexample since indeed the two events $X_1$ and $X_2$ are positively associated for all $n$ and $m$. 
\\
\\
\textbf{2.2 Counterexamples to monotonicity.}
We now give a counterexample to Conjecture \ref{evencon}.  Consider the graph shown rightmost on Figure \ref{counter} with $n$ edges along the outer paths and $m$ edges along the inner paths for some even $m$. 
Let further $a$ and $b$ be the two points shown. Then there are 8 possible even subgraphs as shown on Figure \ref{counter} where we have also listed their corresponding weigths. We see that $Z = 1+x^{2n}+x^{2m}+4x^{n+m}+x^{2n+2m}$ and as $a$ and $b$ are connected in the 3rd and the last graph only and therefore
\begin{align*}
l_x(a \cc b) = \frac{x^{2m} + x^{2m+2n}}{Z}.
\end{align*}
Now, numerical inspection shows that this function is not monotonic for $m=2$ and $n \geq 8$. This provides a counterexample to Conjecture \ref{evencon}. The intuition behind the counterexample is that when $x$ increases we get an interval where it is more likely to sample one of the graphs with $n$ vertices where $a$ and $b$ are not connected. Similarly, we can use the coupling to the loop-$\mathrm{O}(1)$-model to calculate the probability that $a$ is connected to $b$ for the single random current and the double loop-$\mathrm{O}(1)$-model. We do the calculation in the appendix and show the plot of the functions in Figure \ref{plot} where we see that they are not monotone. Remark that $P_x^{\otimes 2}(a \cc b)$  is always monotone since it satisfies SING.

	\begin{figure}
		{\caption{The graph in question shown as the rightmost graph along with its eight even subgraphs.  We let the outer paths be $n$ edges long and the inner paths be $m$ edges long. The nodes $a$ and $b$ are marked with dots. We list number of edges of each subgraph, the corresponding weights and whether $a$ and $b$ are connected in the subgraph. \label{counter} }}
		
		\begin{center}
		{ 	\begin{tabular}{c |c | c | c | c | c | c | c| c  } 
				\hline
			  Subgraph &	\begin{tikzpicture}[scale = 0.5]
			\draw (1,1) node {.};
			\draw (1,-1) node {.};
			\draw (1,-2) ; 
			\end{tikzpicture}
			& 
			\begin{tikzpicture}[scale = 0.5]
			\draw  (0,0) -- (0,2) --(2,2)  -- (2,0)  (0,0) -- (0, - 2) --(2, - 2)  -- (2,0)  ;
			\draw (1,1) node {.};
			\draw (1,-1) node {.};
			\end{tikzpicture}
			&
			\begin{tikzpicture}[scale = 0.5]
			\draw (0,0) --(1,1) -- (2,0)  (0,0) -- (1,-1) -- (2,0)  ;
			\draw (1,1) node {.};
			\draw (1,-1) node {.};
			\draw (1,-2) ;
			\end{tikzpicture}
			&
			\begin{tikzpicture}[scale = 0.5]
			\draw   (0,0) -- (1,-1) -- (2,0)  (0,0) -- (0,2) --(2,2)  -- (2,0)    ;
			\draw (1,1) node {.};
			\draw (1,-1) node {.};
			\draw (1,-2) ;
			\end{tikzpicture}
			&
			\begin{tikzpicture}[scale = 0.5]
			\draw   (0,0) -- (1,-1) -- (2,0)   (0,0) -- (0, - 2) --(2, - 2)  -- (2,0)  ;
			\draw (1,1) node {.  };
			\draw (1,-1) node {. };
			\end{tikzpicture}
			&
			\begin{tikzpicture}[scale = 0.5]
			\draw (0,0) --(1,1) -- (2,0)   (0,0) -- (0, - 2) --(2, - 2)  -- (2,0)  ;
			\draw (1,1) node {.};
			\draw (1,-1) node {.};
			\end{tikzpicture}
			&
			\begin{tikzpicture}[scale = 0.5]
			\draw (0,0) --(1,1) -- (2,0)    (0,0) -- (0,2) --(2,2)  -- (2,0)   ;
			\draw (1,1) node {.};
			\draw (1,-1) node {.};
			\draw (1,-2) ;
			\end{tikzpicture}
			&
			\begin{tikzpicture}[scale = 0.5]
			\draw (0,0) --(1,1) -- (2,0)  (0,0) -- (1,-1) -- (2,0)  (0,0) -- (0,2) --(2,2)  -- (2,0)  (0,0) -- (0, - 2) --(2, - 2)  -- (2,0)  ;
			\draw (1,1) node {.};
			\draw (1,-1) node {.};
			\draw (1,1.2) node {\tiny{$m$}};
			\draw (1,-1.2) node {\tiny{$m$}};
			\draw (1,2.2) node {\tiny{$n$}};
			\draw (1,-1.8) node {\tiny{$n$}};
			\end{tikzpicture} \\
			\hline 
		 	Edges & 0 & $2n$ &  $2m$ & $n+m$ & $n+m$ & $n+m$ & $n+m$ &  $2m +2n$ \\
			\hline 
			Weight & 1 & $x^{2n}$ &  $x^{2m}$ & $x^{n+m}$ & $x^{n+m}$ & $x^{n+m}$ & $x^{n+m}$ &  $x^{2m +2n}$ \\
				\hline 
			$\{a \cc b \} $ & $\times$  & $\times $ &  \checkmark &  $\times $ &  $\times $ &  $\times $ &  $\times $ & \checkmark\\
				\hline 
			\end{tabular}}
		\end{center}
	\end{figure}

\begin{figure} 
	{\caption{Plot of the non-monotonous function for the event $\{ a \cc b \}$ and $l_x$ and $(n,m)=(18,2)$ (left), $P_x$ and $(n,m) = (2000,300)$ (middle) as well as  $l_x^{\otimes 2}$ and $(n,m) = (38,2)$ (right).} \label{plot} }
	\begin{center}
	{\includegraphics[width= 4 cm]{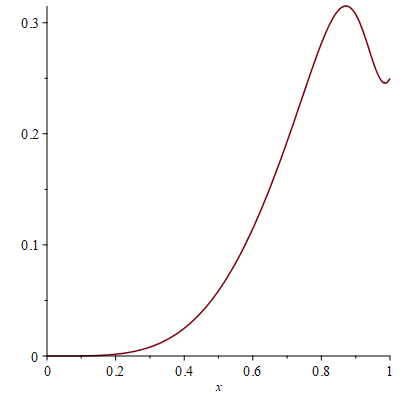}
	\includegraphics[width= 4 cm]{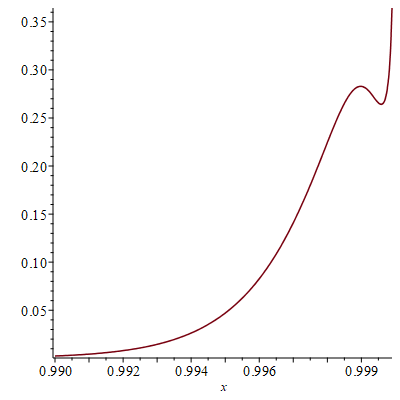}
	\includegraphics[width= 4 cm]{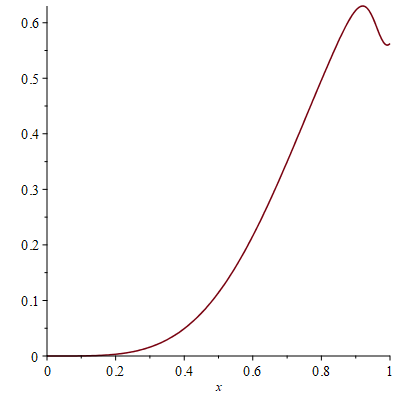}}
	\end{center}
\end{figure}

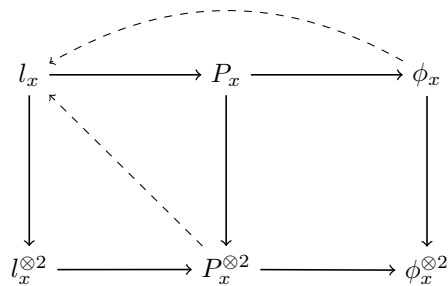
\begin{figure}
		{\caption{Overview of the couplings and correspondingly the implication diagram for monotonicity and FKG in view of Theorem \ref{sumthm}. 
		 Each of the thick lines is either a union of the measure with Bernoulli-percolation (horizontal) or with an independent copy of itself (vertical). Since we have monotonicity and FKG for Bernoulli percolation it follows from Theorem \ref{sumthm} that we obtain the properties also along the thick lines (for some suitable graphs and parameters where we have monotonicity or FKG already).  
		  We have indicated the couplings from Theorem \ref{newcoupling} and  \cite[Corollary 3.6]{GJ09} obtained by picking an even subgraph uniformly at random with dashed lines.  \label{monperco}}
 \begin{center}
 	
\begin{tikzpicture}[font=\sffamily]
\node (X) at (0,0) {
	$l_x$};
\node (Y) [right=2cm of X]  {$P_x$};
\node (Z) [right=2cm of Y]  {$\phi_x$};
\node (A) [below=2cm of X]  {$l_x^{\otimes 2}$
};
\node (B) [below=2cm of Y]  {$P_x^{\otimes 2}$
};
\node (C) [below=2cm of Z]  {$\phi_x^{\otimes 2}$};
\draw [semithick,->] (X) -- (Y);
\draw [semithick,->] (Y) -- (Z);
\draw [semithick,->] (X) -- (A);
\draw [semithick,->] (Y) -- (B);
\draw [semithick,->] (Z) -- (C);
\draw [semithick,->] (A) -- (B);
\draw [semithick,->] (B) -- (C);
\draw [dashed, ->] (B) -- (X);
 \path[dashed,->]
 (Z) edge[bend right] node [left] {} (X);
\end{tikzpicture}
\end{center}}
\end{figure}

\section{Monotonicity and FKG are stable under unions} 
One may observe that the counterexample in Figure \ref{plot} grow larger in size when going from the $l_x$ to $P_x$. The intuition being that the additional independent Bernoulli percolation, relating these models, `enhances' properties of monotonicity and positive association. This intuition is in line with the following theorem which we will use to fill out Table \ref{table:overview}. Further, since we know MON and FKG for percolation the implications along the arrows in Figure \ref{monperco} follow.

\begin{theorem}\label{sumthm}
	Suppose that $\mu_x$ and $\nu_x$ are percolation measures monotonic in $x$. Let
	$\mu_x \cup \nu_x$ have the law of the union of two sets independently sampled. If  $\mu_x$ and $\nu_x$ both satisfy FKG then $\mu_x \cup \nu_x$ satisfies FKG. 
\end{theorem}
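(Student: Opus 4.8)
The plan is to lift the desired inequality to the product space on which the two independent samples live, prove FKG there, and transport it back through the union map. Write $\omega = \omega_1 \cup \omega_2$ with $\omega_1 \sim \mu_x$ and $\omega_2 \sim \nu_x$ sampled independently, so that by the very definition of the union, $\mu_x \cup \nu_x$ is the push-forward of the product measure $\mu_x \otimes \nu_x$ on pairs $(\omega_1,\omega_2)$ under the map $u(\omega_1,\omega_2) = \omega_1 \cup \omega_2$. For a function $f$ on edge configurations of $G$ put $\tilde f = f \circ u$; then $\E_{\mu_x \cup \nu_x}[f] = \E_{\mu_x \otimes \nu_x}[\tilde f]$.

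First I would record that $u$ is monotone: if $\omega_1 \leq \omega_1'$ edgewise then $\omega_1 \cup \omega_2 \leq \omega_1' \cup \omega_2$, and symmetrically in the second coordinate, so $u$ is an increasing map from the product lattice $\{0,1\}^E \times \{0,1\}^E$ (with the coordinatewise order) to $\{0,1\}^E$. Consequently, whenever $f$ is increasing the pullback $\tilde f$ is increasing on the product lattice.

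The substantive step is that FKG is preserved under independent products: since $\mu_x$ and $\nu_x$ each satisfy FKG, the product $\mu_x \otimes \nu_x$ satisfies FKG on the product lattice. I would prove this by conditioning. For increasing $F,G$ on the product lattice and each fixed $\omega_1$, FKG for $\nu_x$ gives $\E_{\nu_x}[F(\omega_1,\cdot)G(\omega_1,\cdot)] \geq \E_{\nu_x}[F(\omega_1,\cdot)]\,\E_{\nu_x}[G(\omega_1,\cdot)]$. The partial expectations $\bar F(\omega_1) = \E_{\nu_x}[F(\omega_1,\cdot)]$ and $\bar G(\omega_1) = \E_{\nu_x}[G(\omega_1,\cdot)]$ are increasing in $\omega_1$, because integrating an increasing function against the fixed measure $\nu_x$ preserves monotonicity; hence FKG for $\mu_x$ gives $\E_{\mu_x}[\bar F\,\bar G] \geq \E_{\mu_x}[\bar F]\,\E_{\mu_x}[\bar G]$. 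Chaining the two inequalities yields $\E_{\mu_x \otimes \nu_x}[FG] \geq \E_{\mu_x \otimes \nu_x}[F]\,\E_{\mu_x \otimes \nu_x}[G]$.

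Combining the two ingredients finishes the argument: for increasing $f,g$ the pullbacks $\tilde f,\tilde g$ are increasing on the product lattice, so
\[
\E_{\mu_x \cup \nu_x}[fg] = \E_{\mu_x \otimes \nu_x}[\tilde f\,\tilde g] \geq \E_{\mu_x \otimes \nu_x}[\tilde f]\,\E_{\mu_x \otimes \nu_x}[\tilde g] = \E_{\mu_x \cup \nu_x}[f]\,\E_{\mu_x \cup \nu_x}[g],
\]
which is exactly FKG for $\mu_x \cup \nu_x$. Note that monotonicity of $\mu_x,\nu_x$ in $x$ plays no role in this part; it is the product step that carries the weight, and although it is classical (positive association is stable under independent products), it is the one place where the hypothesis that both factors satisfy FKG is genuinely used, so I expect it to be the main, if standard, obstacle.
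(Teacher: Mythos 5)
Your proof of the FKG conclusion is correct and complete, but it is a genuinely different route from the paper's: the paper does not prove the FKG part at all, instead citing an external reference (\cite{Pin19}) for it, and spends its entire written proof on a different claim --- that $\mu_x \cup \nu_x$ is monotonic (stochastically increasing) in $x$, which it establishes by taking Strassen couplings $(\eta,\omega)$ and $(\tilde\eta,\tilde\omega)$ of $\mu_{x_1},\mu_{x_2}$ and $\nu_{x_1},\nu_{x_2}$ respectively and observing that $(\eta\cup\tilde\eta,\,\omega\cup\tilde\omega)$ is a monotone coupling of the two union measures. Your argument --- push forward the product measure under the increasing map $u(\omega_1,\omega_2)=\omega_1\cup\omega_2$, and prove that positive association is stable under independent products by the standard conditioning/partial-expectation chain --- is the right proof of the statement as literally written, and it has the virtue of being self-contained where the paper is not. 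What it does not deliver is the other half of what the surrounding section title (``Monotonicity and FKG are stable under unions'') and the paper's own proof clearly intend the theorem to cover, namely monotonicity in $x$ of the union measure; you correctly observe that the hypothesis of monotonicity in $x$ is never used in your argument, which is precisely the signal that the theorem is implicitly asserting that extra conclusion. To match the paper's full intent you would append the two-line Strassen coupling argument; for the FKG claim itself your proof stands on its own and is, if anything, more informative than the paper's citation.
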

\begin{proof}
	If $x_1 < x_2$, then $\mu_{x_2}$ stochastically dominates $\mu_{x_1}$.  We then use Strassen's characterization of stochastic domination.  Let $P_\mu(\eta, \omega) $ be a coupling of the two. Where $\eta \sim \mu_{x_1}$, $\omega \sim \mu_{x_2} $ and $P_\mu( \{ \eta \leq \omega \}) = 1$. Similarly, let $P_\nu( \tilde \eta, \tilde \omega) $ be a coupling of $\nu_{x_1}$ and $\nu_{x_2}$. Then 
	$\mu_{x_1} \cup \nu_{x_1}$ has the law of $\eta \cup \tilde \eta$ whereas  $\mu_{x_2} \cup \nu_{x_2}$ has the law of  $\omega \cup \tilde \omega$. So the measure distributed as $ ( \eta \cup \tilde \eta, \omega \cup \tilde \omega) $ is a coupling with the correct marginals and such that $ \eta \cup \tilde \eta \leq \omega \cup \tilde \omega $ almost surely. This proves the correct stochastic domination. 
	The proof of the FKG-part is given in \cite{Pin19}. 
\end{proof}

\section{A new double current coupling}
The counterexamples presented above do not seem to work for the double current. This  vaguely suggest that the double current is similar in qualitative behaviour to the random cluster model, whereas the single current is more reminiscent of the loop-$\mathrm{O}(1)$-model. The double current also has the same phase transition as the random cluster model, but it is not known to be the case for the single current \cite[Question 1]{DC16}.  This intuition is further supported by the following new coupling for the double current which is exactly the same coupling as for the random cluster model  \cite[Corollary 3.6]{GJ09}. 

\begin{theorem}\label{newcoupling}
	Sample a uniform even subgraph from the traced double current. That has the law of the loop-$\mathrm{O}(1)$-model. 
\end{theorem}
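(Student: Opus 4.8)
The plan is to identify the two measures through their Fourier--Walsh coefficients on $\{0,1\}^{E}$: two probability measures on edge configurations agree if and only if $\mathbb{E}[\chi_S]$ agrees for every $S\subseteq E$, where $\chi_S(\eta)=\prod_{e\in S}(-1)^{\eta_e}$. So I would fix $S$ and compare the character expectation of the uniform even subgraph of $P_x^{\otimes 2}$ with $\mathbb{E}_{l_x}[\chi_S]$, where the left object means: sample $\omega\sim P_x^{\otimes 2}$ and then sample $\xi$ uniformly among the even subgraphs of $\omega$.

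The first step is the elementary but crucial fact about uniform even subgraphs. For fixed $\omega$ the even subgraphs of $\omega$ form an $\mathbb{F}_2$-vector space (the cycle space of $\omega$), and the average of a character over a linear space is $0$ or $1$ according to whether the character is nontrivial or trivial on it; since $\xi\subseteq\omega$ only $S\cap\omega$ matters, giving
\[
\mathbb{E}[\chi_S(\xi)\mid\omega]=\mathbf{1}[\,S\cap\omega\text{ lies in the cut space of }\omega\,],
\]
i.e. the indicator that $|S\cap\gamma|$ is even for every cycle $\gamma\subseteq\omega$, equivalently that $S\cap\omega$ is the disagreement set of some $\pm1$ colouring of the vertices touched by $\omega$. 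Taking expectations reduces the left side of the theorem to the purely percolative quantity $\mathbb{P}_{\omega\sim P_x^{\otimes 2}}[\,S\cap\omega\text{ is a cut of }\omega\,]$.

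For the right side I would use the high-temperature expansion: writing $Z_S=\sum_{g\in\mathcal{E}_\emptyset(G)}x^{|g|}(-1)^{|S\cap g|}$ one has $\mathbb{E}_{l_x}[\chi_S]=Z_S/Z$, and the spin representation $Z_S=2^{-|V|}\sum_{\sigma}\prod_{uv\in E}\bigl(1+(-1)^{\mathbf{1}[uv\in S]}x\sigma_u\sigma_v\bigr)$ exhibits this as an Ising correlation that equals $1$ precisely when $S$ is a cut of the whole graph $G$ and is a strictly smaller quantity otherwise. The theorem then becomes the single identity
\[
\mathbb{P}_{\omega\sim P_x^{\otimes 2}}[\,S\cap\omega\text{ is a cut of }\omega\,]=\frac{Z_S}{Z}\qquad\text{for all }S\subseteq E .
\]

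The heart of the proof, and the step I expect to be the main obstacle, is this last identity. I would attack it through the random current representation, realising $P_x^{\otimes 2}$ as the trace of a sourceless double current $\mathbf n_1+\mathbf n_2$ so that $\omega=\widehat{\mathbf n}_1\cup\widehat{\mathbf n}_2$. The cut condition is a connectivity statement about $\omega$ — it says the $S$-twisted double cover of each cluster of $\omega$ splits into two sheets — and such double-current connectivity events are exactly what the switching lemma is built to evaluate. The delicate point is that a single current gives the wrong answer: a naive parity computation produces $(Z_S/Z)^2$, the XOR-Ising square, so the entire content is to see that the two independent copies together with the specific parameter $x^2$ of the Bernoulli percolation conspire, via switching, to collapse the two-replica connectivity back to the single power $Z_S/Z$. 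I would first pin down the bookkeeping on small graphs (two parallel edges already forces $p=x^2$ uniquely) and then run the switching argument in general; should the switching combinatorics prove unwieldy, the fallback is to expand the cut indicator directly as a spin sum and control the resulting cluster weight $2^{c(\omega)}$.
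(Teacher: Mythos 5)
Your Fourier--Walsh reduction is correct and is a genuinely different framing from the paper's argument: for fixed $\omega$ the even subgraphs of $\omega$ form the cycle space of $\omega$, the average of $\chi_S$ over that $\mathbb{F}_2$-linear space is the indicator that $S\cap\omega$ lies in the cut space of $\omega$, and $\mathbb{E}_{l_x}[\chi_S]=Z_S/Z$, so the theorem is indeed equivalent to the identity $\mathbb{P}_{\omega\sim P_x^{\otimes 2}}\left[\,S\cap\omega\text{ is a cut of }\omega\,\right]=Z_S/Z$ for every $S\subseteq E$ (one can verify it by hand on two parallel edges, say). The difficulty is that you stop exactly where the work begins. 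That identity \emph{is} the theorem, and your proposal only announces a strategy for it --- ``I would attack it through the switching lemma \dots\ run the switching argument in general \dots\ the fallback is \dots'' --- without executing a single step. The event that $S\cap\omega$ is a cut of $\omega$ is not one of the source/connectivity events that the switching lemma evaluates off the shelf, and you yourself flag that the naive computation yields $(Z_S/Z)^2$ and that the two replicas must ``conspire'' to produce a single power; that conspiracy is precisely the content to be proved. As it stands the proposal is a correct but unproved reformulation, not a proof.

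For contrast, the paper closes this gap by a short direct computation with the mass function: it inserts the explicit formula for $P_x^{\otimes 2}(\omega)$ from Theorem 3.2 of \cite{Lis} (which carries the factor $\abs{\mathcal{E}_\emptyset(\omega)}$ that exactly cancels the uniform sampling), sums $\abs{\mathcal{E}_\emptyset(\omega)}^{-1}P_x^{\otimes 2}(\omega)$ over $\omega\supseteq\eta$, interchanges the sums, applies the binomial theorem to the edges of $\omega$ beyond $\eta\cup\omega_1$, and finishes with the bijection $\omega_1\mapsto\eta\triangle\omega_1$ on $\mathcal{E}_\emptyset(G)$. Some such external input --- Lis's formula, or an equivalent switching-type identity --- is unavoidable in your route as well, since one needs a computable handle on $P_x^{\otimes 2}$ before any character expectation can be evaluated. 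If you want to salvage the Fourier approach, the most economical fix is to substitute Lis's formula into your cut-probability and rerun essentially the paper's computation with the extra sign $(-1)^{\abs{S\cap\xi}}$; at that point, however, the character formalism buys little over arguing directly at the level of the measures.
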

\begin{proof}
	We use the characterization from Theorem 3.2 in \cite{Lis} which gives that
	\begin{align*}
	P_x^{\otimes 2}(\omega) = \frac{1}{Z^2} \abs{ \mathcal{E}_{\emptyset}(\omega)} \sum_{\omega_1 \subset \omega,  \omega_1 \in \mathcal{E}_\emptyset(G)}  x^{\abs{\omega_1}}x^{2 \abs{\omega_2}} (1-x^2)^{\abs{E} - \abs{\omega}}.
	\end{align*}
	Here $\omega_2 = \omega \slash \omega_1$ and compared to Theorem 3.2 in \cite{Lis} there is an additional sum since a slightly different measure is considered there.
The probability of sampling an even subgraph $\eta$ is given as
	\begin{align*}
	P_{coup}(\eta) 
	& =  \sum_{\omega \supset \eta } \frac{1}{ \abs{\mathcal{E}_{\emptyset}(\omega)}} P_x^{\otimes 2}(\omega)  \\
	&  = \frac{1}{Z^2} \ \sum_{\omega} \sum_{\omega_1 \in \mathcal{E}_\emptyset(G)}  1_{\eta \subset \omega} 1_{\omega_1 \subset \omega}x^{\abs{\omega_1}}x^{2 \abs{\omega_2}} (1-x^2)^{\abs{E} - \abs{\omega}}.
	\end{align*}
	Now, we can interchange the two sums and then sum over all edges from $\omega_1 \cup \eta$ and upwards.
	To do that, we let $k$ denote the number of open edges in $\omega$ addition to $\omega_1 \cup \eta$. Then 
	\begin{align*}
	P_{coup}(\eta)  &= \frac{1}{Z^2} \ \sum_{\omega_1 \in \mathcal{E}_\emptyset(G)}  \sum_{k=0}^{\abs{E} - \abs{\eta \cup \omega_1}} \binom{\abs{E} - \abs{\eta \cup \omega_1}}{k} x^{\abs{\omega_1}}x^{2(\abs{\eta \cup \omega_1} +k - \abs{\omega_1})} (1-x^2)^{\abs{E} - \abs{\eta \cup \omega_1} -k } \\
	 &= \frac{1}{Z^2} \ \sum_{\omega_1 \in \mathcal{E}_\emptyset(G)} x^{- \abs{\omega_1}} x^{2\abs{\eta \cup \omega_1}} 
	 =\frac{1}{Z^2} \ x^{\abs{\eta}}  \sum_{\omega_1 \in \mathcal{E}_\emptyset(G)}  x^{\abs{\eta \triangle \omega_1}} =\frac{1}{Z^2}  Z x^{\abs{\eta}}  = l_x(\eta)
	\end{align*}
	where we used the bionomial theorem and the fact that for a fixed even subgraph $\eta$ the map $\omega_1 \mapsto \eta \triangle \omega_1$ is a bijection on the set of even subgraphs.
	
\end{proof}
\noindent \textbf{4.1 Monotonicity of 1-edge events.}
We prove that the event $\{ e \text{ is open} \}$ is monotonic in the loop-$\mathrm{O}(1)$-model. We say that $e$ is \emph{cyclic} if $e$ is part of a loop. Hence by an analogous argument as in  \cite[Corollary 3.6]{GJ09} we obtain the following Corollary.

\begin{corollary} \label{cor1}
	It holds that
	\begin{align*}
	 \frac{1}{2} P_x^{\otimes 2} ( e \textnormal{ open, cyclic} )  = l_x( e \textnormal{ open} ) = \frac{1}{2} \phi_{x} ( e \textnormal{ open, cyclic} ) 
	\end{align*}
	and thus by monotonicity of $ \phi_{x} $ one edge events are monotone. 
\end{corollary}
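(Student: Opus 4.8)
The plan is to reduce both the first and the last equality to a single deterministic observation about uniform even subgraphs, and then to exploit that monotonicity is easy precisely for the random cluster model $\phi_x$.

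First I would fix a configuration $\omega$ and compute, for $\eta$ chosen uniformly from $\mathcal{E}_\emptyset(\omega)$, the conditional probability $\Prb(e \in \eta \mid \omega)$. The set $\mathcal{E}_\emptyset(\omega)$ is a vector space over $\mathbb{F}_2$ (the cycle space of $\omega$) under symmetric difference, and $\eta \mapsto \mathbf{1}[e \in \eta]$ is a linear functional on it. This functional is identically zero exactly when $e$ lies in no cycle of $\omega$, i.e. when $e$ is not open-and-cyclic, and is otherwise surjective onto $\mathbb{F}_2$, in which case its kernel has index two. Hence
\begin{align*}
\Prb(e \in \eta \mid \omega) = \tfrac{1}{2}\, \mathbf{1}[e \textnormal{ open, cyclic in } \omega].
\end{align*}
This is the analogue of the computation in \cite{GJ09}; it is the technical heart of the argument but is short.

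Next I would take expectations. By Theorem \ref{newcoupling}, sampling $\omega \sim P_x^{\otimes 2}$ and then a uniform $\eta \subset \omega$ produces $\eta \sim l_x$; averaging the displayed conditional probability over $\omega$ therefore yields $l_x(e \textnormal{ open}) = \tfrac{1}{2} P_x^{\otimes 2}(e \textnormal{ open, cyclic})$, the first equality. The result of \cite{GJ09} states the same coupling with $\phi_x$ in place of $P_x^{\otimes 2}$, so the identical averaging gives $l_x(e \textnormal{ open}) = \tfrac{1}{2}\phi_x(e \textnormal{ open, cyclic})$, the second equality.

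Finally, for monotonicity I would observe that $\{e \textnormal{ open, cyclic}\}$ is an \emph{increasing} event: adding edges to a configuration can neither close $e$ nor destroy a cycle already passing through $e$. Since $\phi_x$ is stochastically increasing in $x$ (the random cluster model satisfies MON), the probability $\phi_x(e \textnormal{ open, cyclic})$ is increasing in $x$, and by the chain of equalities so is $l_x(e \textnormal{ open})$. I expect the main obstacle to be conceptual rather than computational: although $l_x$ is not monotone and even $\{a \cc b\}$ fails to be monotone for $l_x$, one must recognize that rewriting the one-edge event through the cyclic event in the monotone model $\phi_x$ turns it into an honest increasing event, and that this is exactly what rescues monotonicity of single-edge events.
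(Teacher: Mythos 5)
Your proposal is correct and follows essentially the same route as the paper: both rest on the two couplings (Theorem \ref{newcoupling} for $P_x^{\otimes 2}$ and the \cite{GJ09} coupling for $\phi_x$) together with the observation that an open cyclic edge lies in exactly half of the even subgraphs of $\omega$, which you justify a bit more explicitly via the $\mathbb{F}_2$ cycle-space argument. Your added remark that $\{e \textnormal{ open, cyclic}\}$ is an increasing event is a detail the paper leaves implicit but is exactly what makes the final monotonicity step go through.
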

\begin{proof}
For any configuration $\omega$ if $e$ is open and cyclic in $\omega$ it means that it is part of a loop $K$. The map $\eta \mapsto \eta \triangle K$ is a bijection between the set of even subgraphs of $\omega$ that contain $e$ and the even subgraphs of $\omega$  that do not contain $e$. Thus, $e$ is part of exactly half of the even subgraphs of $\omega$ and the probability that $e$ is still open when we pick an even subgraph uniformly at random is exactly $\frac{1}{2}$. The first equality then follows from Theorem \ref{newcoupling} and the second from the similar coupling from the random cluster model in \cite[Theorem 3.1]{GJ09} as in the proof of \cite[Corollary 3.6]{GJ09}. 

\end{proof}

\noindent We now show how monotonicity of the 1-edge event for the other models follows. 
\begin{corollary}
	Suppose that 
	$ p\colon \lbrack 0,1 \rbrack \to \lbrack 0,1 \rbrack, x \mapsto p(x) $ is non-decreasing and differentiable. Then 
\begin{align*} 
	x \mapsto (l_x \cup \mathbb{P}_{p(x)}) (  e \textnormal{ open} )
\text{           as well as             }
x \mapsto (l_x \cup l_x \cup \mathbb{P}_{p(x)}) (  e \textnormal{ open}  )
\end{align*} 
are increasing. 

\end{corollary}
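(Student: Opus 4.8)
The plan is to reduce both statements to the single-edge monotonicity for $l_x$ already established in Corollary \ref{cor1}, combined with an elementary computation that exploits independence. Write $q(x) := l_x(\{e \textnormal{ open}\})$, and recall that $B_p$ denotes Bernoulli percolation whose parameter $p=p(x)$ itself varies with $x$. By Corollary \ref{cor1} the function $q$ is non-decreasing; moreover, since the $l_x$-weights are $x^{\abs{g}}/Z$ with $Z = \sum_{g\in\mathcal{E}_\emptyset(G)} x^{\abs{g}}$ a strictly positive polynomial on $(0,1)$, the quantity $q$ is a ratio of polynomials and hence differentiable on $(0,1)$, so that derivative computations are legitimate.

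The key step is to write the two edge probabilities in closed form. Because the configurations entering a union are sampled independently, the edge $e$ is closed in the union exactly when it is closed in each factor, so
\[
(l_x \cup B_p)(\{e \textnormal{ open}\}) = 1 - (1 - q(x))(1 - p(x)),
\]
and, sampling two independent copies of $l_x$ together with one independent $B_p$,
\[
(l_x \cup l_x \cup B_p)(\{e \textnormal{ open}\}) = 1 - (1 - q(x))^2 (1 - p(x)).
\]
Both right-hand sides are obtained by applying the map $t \mapsto 1-(1-t)(1-p)$ (respectively its squared variant) to the single-edge marginal, which is precisely the operation that taking an independent union induces on a one-edge event.

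It then remains to differentiate. One computes
\[
\frac{d}{dx}\Big[1 - (1 - q)(1 - p)\Big] = q'(1-p) + (1-q)p'
\]
and
\[
\frac{d}{dx}\Big[1 - (1 - q)^2(1 - p)\Big] = 2(1-q)q'(1-p) + (1-q)^2 p'.
\]
Every factor here is non-negative: $q' \geq 0$ by Corollary \ref{cor1}, $p' \geq 0$ by hypothesis, and $1-q,\,1-p \in [0,1]$ since both are probabilities. Hence both derivatives are non-negative, giving monotonicity; strict monotonicity holds whenever $p(x)<1$, since then the leading term $q'(1-p)$ (respectively $2(1-q)q'(1-p)$) is strictly positive wherever $q$ is strictly increasing.

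I do not expect a genuine obstacle: all the substance is carried by Corollary \ref{cor1}, and the remainder is the observation that union with an independent, monotone-parameter percolation preserves monotonicity of a one-edge marginal. The only point needing a word of care is the differentiability of $q$, addressed above by its rationality in $x$; alternatively one can bypass differentiation entirely and argue directly that a product of non-negative, non-increasing functions is non-increasing, so that one minus such a product is non-decreasing.
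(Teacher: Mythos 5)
Your proof is correct and follows essentially the same route as the paper: both use independence to write the one-edge marginal of the union as $1-(1-q)(1-p)$ (equivalently $q+p(1-q)$), invoke Corollary \ref{cor1} for $q'\geq 0$, and conclude by inspecting the sign of the derivative, treating the double copy via $l_x^{\otimes 2}(e\textnormal{ open}) = 1-(1-q)^2$. No substantive difference.
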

\begin{proof}
	We have that 
	\begin{align*}
	\left(l_x \cup \mathbb{P}_{p(x)}\right)\left( e \textnormal{ open} \right) = \frac{1}{\sum_{g \in \mathcal{E}_\emptyset(G)} x^{\abs{g}}}  \sum_{g \in \mathcal{E}_\emptyset(G)} x^{\abs{g}} \left(1_{e \in g} + p(x) (1_{e  \not \in g}) \right) 
	= l_x( e \textnormal{ open} ) + p(x) (1- l_x( e \textnormal{ open} ) ). 
	\end{align*} 
	Hence it follows that $(l_x \cup \mathbb{P}_{p(x)})( e \textnormal{ open} )$ has positive derivative by Corollary \ref{cor1} since 
	\begin{align*}
	l_x( e \textnormal{ open} )' (1-p(x))  + p(x)'(1-l_x( e \textnormal{ open} )) \geq 0. 
	\end{align*}
	To see the second part notice that $l_x^{\otimes 2} ( e \textnormal{ open} ) = l_x( e \textnormal{ open} ) \left( 2- l_x( e \textnormal{ open} ) \right) $ and hence $l_x^{\otimes 2} ( e \textnormal{ open} )' \geq 0$.
\end{proof}

\section{Discussion}
On trees  $P_x^{\otimes 2} \sim \mathbb{P}_{x^2}$ and $\phi_x \sim \mathbb{P}_x$ so the models are not the same, but we might ask if the law of the cyclic edges is the same for the two models now that the probabilities that an edge is part of a loop is the same. To see that this is not the case consider again the example in Figure \ref{fkgcounter}.
First, we split up a configuration $\omega = \omega_c \dot \cup \omega_s$ where $\omega_c$ are all the cyclic edges in $\omega$ and $\omega_s$ are the rest. 
Then look at 
\begin{align*}
P_x^{\otimes 2} ( \abs{\omega_c} = l+m ) = \frac{1}{Z^2} \left( x^{2(l+m)} + 2x^{l+m} + x^{2(l+m)}  \right) \cdot (1-x^{2n})
\end{align*}
whereas
\begin{align*}
\phi_x( \abs{\omega_c} = l+m ) = \frac{1}{Z} (x^{l+m}+ x^{l+m})(1-x^n)
\end{align*}
and hence
\begin{align*}
\frac{\phi_x( \abs{\omega_c} = l+m )}{P_x^{\otimes 2} ( \abs{\omega_c} = l+m )}  = 
 \underset{Z}{\underbrace{\left( 1+x^{n+l} + x^{n+m} + x^{l+m }\right)}}  \frac{2 x^{l+m}}{(1+x^n) 2 x^{l+m} (1+x^{l+m})} \neq 1.
\end{align*}
We conclude that the distribution of the loops are not quite the same for the double current and the random cluster model.

With the results at hand, we can now summarize our findings.
In conclusion from the basic results in the introduction, the counterexamples and Theorem \ref{sumthm} used as shown on Figure \ref{monperco} we can establish the properties in Table \ref{table:nonlin}.
 With this overview at hand, it is natural to ask add the following questions in addition to  Question \ref{dcq}. 

\begin{question} \label{quest}
	Is the double random current $P_x^{\otimes 2}$ monotonic, does it satisfy FKG? 
\end{question}

\noindent Notice that our counterexamples do not work for $P_x^{\otimes 2}$. 
Further, monotonicity of the random cluster model can be proven using Holley's criterion for example in the form of Lemma 1.5 in \cite{DC17} . However, this fails for the double random current model for example on the graph considered in Figure \ref{fkgcounter}. 
It is further noted that the FKG-lattice condition is not satisfied \cite{Lis19}.

Let us mention that apart from the coupling shown above and the fact that we know that SING is satisfied then it also holds that $x^{o(g_1) + o(g_2)} p(x)^{\abs{E} - o(g_1 \cup g_2)}  = x^{2 \abs{E}} x^{-o(g_1 \triangle g_2) } $ whenever $p(x) = x^2$. Hence the double random current plays very well together with the structure on the space of even subgraphs. 
From the coupling we see that the loops of the double current are related to the loops of the random cluster model and we hope that the couplings described here could help resolving Question \ref{quest}.  Further, it might inspire couplings also for the double-current measure with sources as the one in \cite[Thm 3.2]{ADTW19}.

\section*{Acknowledgments}
The author would like to thank the Swiss European Mobility Exchange program as well as the Villum Foundation for support through the QMATH center of Excellence(Grant No. 10059) and the Villum Young Investigator (Grant No. 25452) programs.  
Further, thanks to Peter Wildemann for discussions and proofreading and to the anonymous referee for many helpful comments. 

\bibliographystyle{unsrt}

\bibliography{mono1bib2}


\section*{Appendix}

\subsection*{A1: Counterexample to FKG for $l_x^{\otimes 2}$}

\textit{Double loop-$\mathrm{O}(1)$-model:}
We use the same sets as in the counterexample for $l_x$. In Tables \ref{table:lX1} and \ref{table:lX2} all the possible configurations and whether they belong to $X_1$ and $X_2$ are listed. Thus, the corresponding probabilties are 
\begin{align*}
l_x^{\otimes 2}(X_1) = \frac{2x^{n+m} + 3 x^{2(n+m)} + 4 x^{3n+m}}{Z^2} = l_x^{\otimes 2}(X_2)
\end{align*}
whereas 
\begin{align*}
l_x^{\otimes 2}(X_1 \cap X_2) = \frac{2x^{2(n+m)} + 4 x^{3n+m}}{Z^2}.
\end{align*}
Hence when we compare and use that $Z = 1 + O(x)$ we see that 
\begin{align*}
l_x^{\otimes 2}(X_1) l_x^{\otimes 2}(X_2) - l_x^{\otimes 2}(X_1 \cap X_2) Z^2 =  4 x^{2n+2m} + o(x^{2n+2m}) - 2 x^{2(n+m)} + o(x^{2n+2m}) = 2 x^{2n+2m } + o(x^{2n+2m})  
\end{align*}
which is positive for $x$ sufficiently small. Again since $Z \geq 1$ this means that FKG is not satisified.

\subsection*{A2: Explicit formulas for $l_x^{\otimes 2}(a \cc b)$ and $P_x(a \cc b)$}

\textit{Single random currents:} To see that for single random currents SING is not generally satisfied we continue the example from the loop-$\mathrm{O}(1)$-model which is shown in Figure \ref{counter} with $n=l$.  and in addition sample percolation with probability  $p(x) = 1- \sqrt{1-x^2}$ on each edge. Let also $m' = \frac{m}{2}$ then we can calculate the probabilities using the coupling.

The main difficulty is when we get the empty subgraph. In that case we split up and count after how often we open all the edges in the segments of length $m'$ (from $a,b$ to the one of the vertices with valence four). 
If we open three of more $m'$-segments $a$ and $b$ will be connected - this happens with probability $4 p^{3m'}(1-p^{m'}) + p^{4m'}$. If we open at most one $m'$-segment $a$ and $b$ are not connected. Further, there are 4 ways of opening exactly two $m'$ segments each with probability $p^{2m'}(1-p^{m'})^2$ in two of the combinations $a$ and $b$ are connected and in the other two they are connected with probability $2p^n - p^{2n}$. So if we in the loop-$\mathrm{O}(1)$-model sample the empty subgraph the probability that $a$ and $b$ are connected after sampling percolation is $f(p) = \left(4 p^{3m'}(1-p^{m'}) + p^{4m'} +  p^{2m'}(1-p^{m'})^2 \left(2 + 2(2p^n - p^{2n}) \right) \right)$.

That means the total probability becomes
\begin{align*}
P_x(a \cc b) =  \frac{1 \cdot f(p)  + x^{2n} \cdot (2p^{m'} - p^{2m'})^2 +x^{2m}+ 4 \cdot x^{n+m}\cdot (2p^{m'} - p^{2m'}) + x^{2n+2m}}{Z}.
\end{align*}

Here we have plotted $P_x(a \cc b)$ in Figure \ref{plot} for $(n,m) = (2000,300)$ where we see that the function is not monotone.

\textit{Double loop-$\mathrm{O}(1)$-model:} For $l_x^{\otimes 2}$ we have to written out the $8 \times 8$ table in Table \ref{table:lX3} where we plot all pairs of even subgraphs. Using the Table we obtain the following function.  
\begin{align*}
l_x^{\otimes 2}( \{ a \cc b \}) = \frac{1}{Z^2} \left( 2 x^{2m} Z - x^{4m} + 2x^{2n+2m} Z - x^{4n+4m} - 2\cdot x^{2n+4m} + 8 x^{2n+2m}   \right).
\end{align*}
We have plotted the function for $(n,m)=(2,18)$ in Figure \ref{plot} showing that monotonicity and in particular SING also fails for $l_x^{\otimes 2}$.

\begin{table}
	\caption{Double Loop-$\mathrm{O}(1)$-model: Overview of when the event $ X_1$ is satisfied. The rows and columns represent the even subgraph sampled in each of the two independent copies of $l_x$. }
	\label{overview}
	\centering
	\begin{tabular}{c | c | c | c | c }
		&  $\emptyset$ & n+m upper & n+m lower & 2n \\
		\hline $\emptyset$ &  &\checkmark & &\\
		\hline  n+m upper &\checkmark &\checkmark &\checkmark & \checkmark\\
		\hline n+m lower & &\checkmark & &\checkmark \\
		\hline  2n &  & \checkmark&  \checkmark& 
	\end{tabular}
	\label{table:lX1}
\end{table}

\begin{table}
	\caption{Double Loop-$\mathrm{O}(1)$-model: Overview of when $ X_1 \cap X_2$ is satisfied. }
	\label{overview}
	\centering
	\begin{tabular}{c | c | c | c | c }
		&  $\emptyset$ & n+m upper & n+m lower & 2n \\
		\hline $\emptyset$ &  & & &\\
		\hline  n+m upper && &\checkmark & \checkmark\\
		\hline n+m lower & &\checkmark & &\checkmark \\
		\hline  2n &  & \checkmark&  \checkmark& 
	\end{tabular}
	\label{table:lX2}
\end{table}

\begin{table}
	\caption{Double Loop-$\mathrm{O}(1)$-model and all pairs of even subgraphs. We list whether $ \{ a \cc b \}$. With the arrows we indicate if the upper or low path of length $n$ or $m$ is open.}
	\label{dlm}
	\centering
	\begin{tabular}{c | c | c | c | c | c | c | c | c }
		&  $\emptyset$ &  $2m$ & $n \uparrow + m \uparrow$ & $ n \uparrow + m \downarrow$ & $ n \downarrow + m \uparrow$ & $n \downarrow + m \downarrow $& $2n$ & $2n + 2m$ \\
		\hline $\emptyset$ &  & \checkmark & & & &  & & \checkmark \\
		\hline $2m$ & \checkmark  & \checkmark& \checkmark & \checkmark&\checkmark &\checkmark & \checkmark& \checkmark\\
		\hline $n \uparrow + m \uparrow$  &  &\checkmark & & \checkmark & &\checkmark & &\checkmark \\
		\hline $ n \uparrow + m \downarrow$ &  & \checkmark&\checkmark & &\checkmark & & & \checkmark\\	
		\hline $ n \downarrow + m \uparrow$ &  & \checkmark& & \checkmark& &\checkmark & & \checkmark\\
		\hline $n \downarrow + m \downarrow $ &  & \checkmark& \checkmark& &\checkmark & & & \checkmark \\	
		\hline $2n$ &  &\checkmark & & & & & & \checkmark \\
		\hline $2n +2m$ & \checkmark &\checkmark & \checkmark &\checkmark &\checkmark & \checkmark& \checkmark& \checkmark\\
	\end{tabular}
	\label{table:lX3}	
	
\end{table}

\subsection*{A3: Couplings}
In this appendix, we show how the definitions above correspond to the usual definitions of the (traced, sourceless) single random current and the random cluster model. Thereby giving a slightly different and self-contained proof of the relations from \cite[Exercise 36]{DC17}  see also \cite[Theorem 3.5]{GJ09}, \cite{LW16} \cite[Theorem 3.1]{Lis}. Using $x = \tanh(\beta)$ recovers our definitions. 

\begin{theorem}
Defining the (traced, sourceless) single random current and the random cluster model as in \cite{DC17} it holds that
\begin{itemize}
\item $l_{\tanh(\beta)} \cup \mathbb{P}_{ 1 - \cosh(\beta)^{-1}} = P_{\tanh(\beta)} $ 
\item $l_{\tanh(\beta)} \cup \mathbb{P}_{\tanh(\beta)} = \phi_\beta $. 
\end{itemize} 
\end{theorem} 
\begin{proof}
 To prove the first relation, consider all sourceless currents $m$ such that $\hat{m} = n$.
For each sourceless current $m$ let  $u(m)$ be the set of edges with an uneven number. Since the current is sourceless $u(m)$ has to be an even spanning subgraph and conversely for every even spanning subgraph $\gamma$ gives rise to some $m$ with $u(m) = \gamma$. Summing the weight of all such $m$ we find that the relative weight of $\gamma$ is
\begin{align*}
\sum_{m \vert u(m) = \gamma} w(m)  = 
\left( \prod_{e \in \gamma} \sum_{n_e \geq 0} \frac{\beta^{2n_e+1}}{(2n_e +1)!} \right)  \left(  \prod_{e \not \in \gamma}
 \sum_{n_e \geq 0} \frac{\beta^{2n_e}}{(2n_e)!} \right) = \sinh(\beta)^{o(\gamma)} \cosh(\beta)^{\abs{E} - o(\gamma)} \propto \tanh(\beta)^{o(\gamma)}.
\end{align*}
Notice that this is exactly the relative weight of $\gamma$ in the loop-$O(1)$-model. 
Now, given that the uneven edges are $\gamma$ to compute the relative weight of $n$ we need a positive number for all the remaining (even) edges of $n$ and the current has to be 0 at the edges that are closed in $n$. This has weights $\frac{\cosh(\beta) -1}{\cosh(\beta)}  $ and $\frac{1}{\cosh(\beta)}$ for each edge respectively. Since the edges are independent given their parity this is exactly Bernoulli percolation with parameter $1 - \cosh(\beta)^{-1}$.
\\
\\
\\
 For the second relation, notice that $ \frac{x}{1-x} = \frac{\tanh(\beta)}{1- \tanh(\beta)} = \frac{p}{2(1-p)}$ with $p = 1- \exp(- 2 \beta)$. 
Let $\eta$ be an even spanning subgraph of $\omega$. There are $E - o(\eta)$ closed edges left and we have to open $o(\omega) - o(\eta)$ of them. Hence $E - o(\omega)$ have to stay closed. Thus,
\begin{align*}
& \sum_{\partial \eta = \emptyset} \mathbb{P}(\omega \mid \eta) l_\beta(\eta)  =  \sum_{\partial \eta = \emptyset, \eta \leq \omega} x^{o(\omega) - o(\eta)} (1-x)^{E- o(\omega)}  x^{o(\eta)} \\
& \propto \left(\frac{x}{1-x}\right)^{o(\omega)} \sum_{\partial \eta = \emptyset, \eta \leq \omega}1  \propto \left(\frac{p}{2(1-p)}\right)^{o(\omega)} 2^{k(\omega) + o(\omega)}
 \propto p^{o(\omega)} (1-p)^{c(\omega)} 2^{k(\omega)} \propto \phi_p(\omega)
\end{align*}
where we used that a graph with $k$ connected components, $\abs{E}$ edges and $\abs{V}$ vertices has $2^{k + \abs{E} - \abs{V}}$ even spanning subgraphs.

\end{proof}
\noindent It now follows that 
\begin{align*}
P_x^{\otimes 2}  = P_x \cup P_x = l_x \cup \mathbb{P}_{1- \sqrt{1-x^2}} \cup  l_x \cup \mathbb{P}_{1- \sqrt{1-x^2}} = l_x^{\otimes 2} \cup  \mathbb{P}_{2(1- \sqrt{1-x^2}) - (1- \sqrt{1-x^2})^2}  =  l_x^{\otimes 2} \cup \mathbb{P}_{x^2}
\end{align*}
where we used that $\mathbb{P}_p \cup  \mathbb{P}_q =  \mathbb{P}_{p+q - pq} $  as stated two lines below (3). That $\phi_x^{\otimes 2} = l_x^{\otimes 2} \cup \mathbb{P}_{x(2-x)}$ follows by a similar and simpler computation.

\end{document}